\newtheorem{lemma}{Lemma}
\newtheorem{theorem}{Theorem}
\newcommand{\al}{\alpha}
\newcommand{\MST}{\textsc{Max $r$-Sat AA}}
\newcommand{\MrLT}{\textsc{Max $r(n)$-Lin-2 AA}}
\newcommand{\MrcLT}{\textsc{Max $r$-Lin-2 AA}}
\newcommand{\MLT}{\textsc{Max Lin-2 AA}}
\newtheorem{krule}{Reduction Rule}
\begin{document}

\title{Note on Max Lin-2 above Average}

\author{
Robert Crowston, Gregory Gutin and Mark Jones\\
\small Department of Computer Science\\[-3pt]
\small  Royal Holloway, University of London\\[-3pt]
\small Egham, Surrey TW20 0EX, UK\\[-3pt]
\small \texttt{robert|gutin|markj@cs.rhul.ac.uk}
}
\date{}
\maketitle

\newenvironment{compress}{\baselineskip=10pt}{\par}
\begin{abstract}
\noindent
In the  Max Lin-2 problem we are given a system $S$ of $m$ linear equations
in $n$ variables over $\mathbb{F}_2$ in which Equation $j$ is assigned a positive integral weight $w_j$ for each $j$. We wish to
find an assignment of values to the variables which maximizes the total weight of satisfied equations.
This problem generalizes Max Cut. The expected weight of satisfied equations is $W/2$, where $W=w_1+\cdots +w_m$; $W/2$
is a tight lower bound on the optimal solution of Max Lin-2.

Mahajan et al. (J. Comput. Syst. Sci.  75, 2009) stated the following parameterized version of Max Lin-2: decide whether
there is an assignment of values to the variables that satisfies equations of total weight at least $W/2+k$, where
$k$ is the parameter. They asked whether this parameterized problem is fixed-parameter tractable, i.e., can be solved
in time $f(k)(nm)^{O(1)}$, where $f(k)$ is an arbitrary computable function in $k$ only. Their question remains open, but
using some probabilistic inequalities and, in one case, a Fourier analysis inequality, Gutin et al. (IWPEC 2009) proved that the problem is
fixed-parameter tractable in three special cases.

In this paper we significantly extend two of the three special cases using only
tools from combinatorics. We show that one of our results can be used to obtain a combinatorial proof that another problem from Mahajan et al. (J. Comput. Syst. Sci.  75, 2009),
Max $r$-SAT above the Average, is fixed-parameter tractable for each $r\ge 2.$ Note that Max $r$-SAT above the Average has been already shown to be fixed-parameter
tractable by Alon et al. (SODA 2010), but the paper used the approach of Gutin et al. (IWPEC 2009).

Keywords: algorithms; fixed-parameter tractability; Max Lin; Max Sat.
\end{abstract}

\section{Introduction}\label{section:intro}

A \emph{parameterized problem} is a subset $L\subseteq \Sigma^* \times
\mathbb{N}$ over a finite alphabet $\Sigma$. $L$ is
\emph{fixed-parameter tractable} if the membership of an instance
$(x,k)$ in $\Sigma^* \times \mathbb{N}$ can be decided in time
$f(k)|x|^{O(1)}$ where $f$ is a computable function of the
{\em parameter} $k$ only~\cite{DowneyFellows99,FlumGrohe06,Niedermeier06}.
If the nonparameterized version of $L$ (where $k$ is just a part of the input)
is NP-hard, then the function $f(k)$ must be superpolynomial
provided P $\neq$ NP. Often $f(k)$ is moderately exponential,
which makes the problem practically tractable for small values of~$k$.
Thus, it is important to parameterize a problem in such a way that the
instances with small values of $k$ are of interest.

Consider the following well-known problem: given a connected digraph $D=(V,A)$,
find an acyclic subdigraph of $D$ with the maximum number of arcs. We
can parameterize this problem in the standard way by asking whether $D$
contains an acyclic subdigraph with at least $k$ arcs. It is easy to
prove that this parameterized problem is fixed-parameter tractable by
observing that $D$ always has an acyclic subdigraph with at least
$|A|/2$ arcs. Indeed, if $k\le |A|/2$, the answer is {\sc yes} and if $k>|A|/2$ then $|V|\le |A|+1\le 2k.$
In the last case, we can check whether $D$ has an acyclic subdigraph with at least $k$ arcs by generating all $|V|!\le (2k)!$ orderings
of $V$ and constructing subdigraphs of $D$ induced by forward arcs.
This gives an $|A|^{O(1)}(2k)!$-time algorithm. However, this algorithm is impractical as $k>|A|/2$ is
large when $|A|$ is large.

Note that $|A|/2$ is a tight lower
bound on the solution; indeed,  $|A|/2$ is the optimum for all digraphs in which the existence
of an arc implies the existence of the opposite arc. Thus, the following parametrization above a tight lower bound
is more appropriate: decide whether $D$ contains an acyclic subdigraph with at least $|A|/2+k$ arcs.

Mahajan and Raman \cite{MahajanRaman99} were the first to consider
problems parameterized above tight lower bounds (PATLB). They indicated that such parameterizations are often the
only ones of practical value. Mahajan et al. \cite{MahajanRamanSikdar09} proved several
results for problems PATLB, and noted that the parameterized complexity of only a few such problems was
investigated in the literature (partially, because this is often a challenging question) and stated several open questions on the topic. Apart from \cite{MahajanRaman99,MahajanRamanSikdar09}, until very recently there were only three papers on problems PATLB: Gutin et al. \cite{GutinRafieySzeiderYeo07}, Gutin et al.
\cite{GutinSzeiderYeo08} and Heggernes et al. \cite{HeggernesPaulTelleVillanger07}.

 The paper of Mahajan et al. \cite{MahajanRamanSikdar09} triggered
 several recent papers where some of the questions in \cite{MahajanRamanSikdar09} were solved. In particular, Gutin et al. \cite{GutinKimSzeiderYeo09a}
 proved that the above-mentioned maximum acyclic subdigraph problem PATLB is fixed-parameter tractable and so are three special cases of Max Lin-2 PATLB
 (the last problem is defined in the next section). In their proofs, Gutin et al. \cite{GutinKimSzeiderYeo09a} used some probabilistic inequalities and, in one case,
 a Fourier analysis inequality. Using this approach, Alon et al. \cite{AlonEtAl2009a} proved that Max $r$-SAT PATLB  as well as many
 other Boolean Constraint Satisfaction Problems PATLB are fixed-parameter tractable.
 The parameterized complexity of Max $r$-SAT PATLB was one of the central open problems in Mahajan et al. \cite{MahajanRamanSikdar09}.

In this short paper, we significantly extend two of the three special cases of Max Lin-2 PATLB using only tools from combinatorics (see Theorems \ref{thm:main1} and \ref{thm:main2}).
We show that our extensions cannot be obtained by the approach of \cite{GutinKimSzeiderYeo09a}.   We also show that the Boolean Constraint Satisfaction Problems PATLB results of \cite{AlonEtAl2009a} can be proved by combinatorial arguments only,
using one of our results for Max Lin-2 PATLB.

\section{Max $r(n)$-Lin-2 above the Average}\label{sec2}

Consider the following problem for a fixed function $r(n)$. This problem should be called Max $r(n)$-Lin-2 PATLB, if we follow \cite{GutinKimSzeiderYeo09a},
but the new name appears to be clearer and simpler.

\begin{center}
\fbox{~\begin{minipage}{11cm}
  \textsc{Max $r(n)$-Lin-2 above the Average} (or {\MrLT} for short)

  \smallskip

  \emph{Instance:} A system $S$ of $m$ linear equations
in $n$ variables over $\mathbb{F}_2$, where
no equation has more than $r=r(n)$ variables and
Equation $j$ is assigned a positive integral weight $w_j$, $j=1,\ldots ,m$,
and a nonnegative integer $k$.
We will write Equation $j$ in $S$ as $\sum_{i\in \al_j}z_i=b_j,$ where $\al_j\subseteq \{1,2,\ldots ,n\}$ and $|\al_j|\le r.$
  \smallskip

  \emph{Parameter:} The integer $k$.

\smallskip

  \emph{Question:} Is there an assignment of values to the $n$ variables such that the total weight of the satisfied
equations is at least $(W+k)/2$, where  $W=w_1+\cdots +w_m$ ?

\smallskip
\end{minipage}~}
\end{center}

We assume that each of the $n$ variables appears in at at least one equation of $S$ and no equation has an empty left-hand side.

Note that $W/2$ is indeed a tight lower bound
for the above problem, as the expected weight of satisfied
equations in a random assignment is $W/2$, and no assignment of values to the variables
satisfies equations of total weight more than $W/2$ if
$S$ consists of pairs of equations with identical left-hand sides and contradicting right-hand sides.

Consider two reduction rules for {\MrLT} introduced in \cite{GutinKimSzeiderYeo09a}.

\begin{krule}\label{rulerank}
Let $A$ be the matrix of the coefficients of the variables in $S$,
let $t={\rm rank} A$ and let columns $a^{i_1},\ldots ,a^{i_t}$ of $A$ be linearly independent.
Then delete all variables not in $\{z_{i_1},\ldots ,z_{i_t}\}$ from the equations of $S$.
\end{krule}

\begin{krule}\label{rule1}
If we have, for a subset $\al$ of $\{1,2,\ldots ,n\}$, an equation $\sum_{i \in \al} z_i =b'$
with weight $w'$, and an equation $\sum_{i \in \al} z_i =b''$ with weight $w''$,
then we replace this pair by one of these equations with weight $w'+w''$ if $b'=b''$ and, otherwise, by
the equation whose weight is bigger, modifying its
new weight to be the difference of the two old ones. If the resulting weight
is~0, we delete the equation from the system.
\end{krule}

\begin{lemma}\label{mGEn}\cite{GutinKimSzeiderYeo09a}
Let $T$ be obtained from $S$ by Rule~\ref{rulerank} or \ref{rule1}.
Then $T$ is a {\sc yes}-instance if and only if $S$ is a {\sc yes}-instance.
Moreover, $T$ can be obtained from $S$  in time polynomial in $n$ and $m$.
\end{lemma}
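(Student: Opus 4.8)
The plan is to prove the two reduction rules separately, showing for each that it preserves the yes/no status of the instance and runs in polynomial time. The key observation is that both rules are really statements about the set of achievable "excess weights" over the average $W/2$, so I would first set up the right invariant to track.

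The plan is to treat the two rules separately, since they are logically independent, and for each to verify both the equivalence of instances and the polynomial running time. The natural invariant to track throughout is the maximum excess over the average, $\mu(S) := \max_z [\,w_{\mathrm{sat}}(z) - W/2\,]$, where $w_{\mathrm{sat}}(z)$ is the total weight of equations satisfied by the assignment $z$; an instance is a {\sc yes}-instance precisely when $\mu(S) \ge k/2$. So it suffices to show that each rule leaves $\mu$ unchanged.

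For Reduction Rule~\ref{rulerank}, I would exploit the fact that, since $a^{i_1}, \ldots, a^{i_t}$ span the column space of $A$, every column satisfies $a^j = \sum_{l=1}^t c_{jl}\, a^{i_l}$ over $\mathbb{F}_2$. Writing the left-hand-side vector of $S$ as $Az = \sum_j z_j a^j$ and substituting, one gets $Az = \sum_{l=1}^t y_l\, a^{i_l}$ with $y_l = \sum_j c_{jl} z_j$; that is, the whole pattern of left-hand-side values is controlled by the $t$ derived quantities $y_1, \ldots, y_t$, which are exactly the left-hand sides of $T$. Since equation $j$ is satisfied by $z$ iff $(Az)_j = b_j$, the weight $w_{\mathrm{sat}}(z)$ in $S$ equals the weight achieved by the assignment $y$ in $T$. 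It remains to observe that the map $z \mapsto y$ is onto $\mathbb{F}_2^t$: setting $z_{i_l} = y_l$ and all other $z_j = 0$ recovers any target, using $c_{i_{l'} l} = \delta_{l'l}$, which holds by uniqueness of the representation in the basis. Hence the two systems realize exactly the same set of satisfied-weight values, so $\mu(S) = \mu(T)$; note $W$ is unchanged, so the threshold is identical.

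For Reduction Rule~\ref{rule1}, I would simply compute the effect on each assignment. In the case $b' = b''$ the combined equation is satisfied under exactly the same assignments as the original pair and contributes the same weight $w'+w''$, so $w_{\mathrm{sat}}$ and $W$ are both unchanged. In the case $b' \ne b''$, exactly one of the two equations is satisfied by any assignment; taking $w' \ge w''$, one checks that replacing them by the heavier equation with weight $w'-w''$ decreases $w_{\mathrm{sat}}(z)$ by the constant $w''$ for every $z$, while $W$ drops by $2w''$. Hence the excess $w_{\mathrm{sat}}(z) - W/2$ is preserved assignment-by-assignment, giving $\mu(S) = \mu(T)$; the deletion of a weight-$0$ equation is the subcase $w' = w''$ of this computation. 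The parameter $k$ is untouched in every case.

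Finally, for the running time: Rule~\ref{rulerank} needs a rank computation and a maximal set of independent columns, obtained by Gaussian elimination over $\mathbb{F}_2$ in time polynomial in $n$ and $m$; Rule~\ref{rule1} only requires locating two equations with identical left-hand sides, which can be done by sorting the sets $\al_j$, again polynomial. The only genuinely delicate point is the correspondence in Rule~\ref{rulerank}: one must argue both that every $z$ projects to some $y$ with the same satisfied weight \emph{and} that every $y$ is hit, so that neither system can beat the other. The weight-shift bookkeeping for Rule~\ref{rule1} is routine once the constant shift of $w''$ is spotted.
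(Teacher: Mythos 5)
The paper does not actually prove this lemma---it is imported by citation from \cite{GutinKimSzeiderYeo09a}---so there is no in-paper argument to compare against; judged on its own merits, your proof is correct and complete. Your invariant (the excess $w_{\mathrm{sat}}(z)-W/2$, preserved assignment-by-assignment) is exactly the right one: for Rule~\ref{rule1} the constant shift by $w''$ against a drop of $2w''$ in $W$ is the whole point, and for Rule~\ref{rulerank} you correctly prove \emph{both} directions of the correspondence, including the surjectivity of $z\mapsto y$ (via $c_{i_{l'}l}=\delta_{l'l}$, or equivalently by extending any $y$ with zeros on the deleted variables), which is the step most often glossed over and is needed to rule out $T$ outperforming $S$. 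This matches the intended argument of the cited source, so your proposal fills the gap the paper leaves to the reference rather than deviating from it.
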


We cannot change $S$ using Rule~\ref{rulerank} (Rule \ref{rule1}), $S$ is {\em irreducible} by Rule~\ref{rulerank} (Rule \ref{rule1}).
If $S$ is irreducible by Rule~\ref{rulerank}, we have $n\le m$.
If  $S$ is irreducible by Rule \ref{rule1}, the symmetric difference $\al_j\Delta \al_p\neq \emptyset$ for each pair $j\neq p.$

Consider the following algorithm for {\MrLT}, which is a modification of an algorithm used in \cite{HastadVenkatesh02}. We assume that, in the beginning, no equation or variable in $S$ is marked.

\begin{center}
\fbox{~\begin{minipage}{11cm}
\textsc{Algorithm $\cal A$}

\smallskip
While $S\neq \emptyset$ and less than $k$ equations are marked, do the following:

\begin{enumerate}

					\item For $1\le i\le n$, calculate $\rho_i$, the number of equations in $S$ containing $z_i$.
                                       \item Choose $z_l$ with minimum $\rho_l$ among all variables still in $S$. Mark $z_l.$
                                       \item Choose an arbitrary equation containing $z_l$, $\sum_{i \in \al} z_i =b.$
                                       \item Mark this equation and delete it from $S$.
                                       \item Replace every equation $\sum_{i \in \al'} z_i =b'$ in $S$ containing $z_l$ by
                                       $\sum_{i \in {\al\Delta\al'}} z_i =b''$, where $b''=b+b'.$
                                       \item Apply Rule \ref{rule1}. (As a result, several equations can be of weight 0 and, thus, are deleted from the system.)
                                     \end{enumerate}
\smallskip
\end{minipage}~}
\end{center}
Observe that $\cal A$ runs in polynomial time. We have the following simple yet important property of $\cal A$.

%the input system $S$ is irreducible by Rules \ref{rulerank} and \ref{rule1}.)

\begin{lemma}\label{lem:marked}
If the input system $S$ is irreducible by Rule \ref{rule1} and algorithm $\cal A$ has marked $k$ equations in $S$, then $S$ is a {\sc yes}-instance.
\end{lemma}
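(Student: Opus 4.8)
The plan is to prove Lemma~\ref{lem:marked} by viewing $\cal A$ as a sequence of $k$ rounds and tracking a single potential, the \emph{excess} $\mathrm{exc}(S)=\max_x \mathrm{wt}_S(x)-W_S/2$, where $\mathrm{wt}_S(x)$ is the weight satisfied by an assignment $x$ and $W_S$ is the total weight of $S$. Since an instance is a {\sc yes}-instance exactly when $\mathrm{exc}(S)\ge k/2$, it suffices to show that each round raises the excess of the surviving system by at least $1/2$. Writing $S=S_0,S_1,\dots,S_k$ for the systems at the ends of successive rounds and $w_1,\dots,w_k\ge 1$ for the (positive integral) weights of the marked equations, I would aim at the one-round inequality $\mathrm{exc}(S_{j-1})\ge \mathrm{exc}(S_j)+w_j/2$, and then chain it with the bound $\mathrm{exc}(S_k)\ge 0$ to get $\mathrm{exc}(S)\ge \tfrac12\sum_j w_j\ge k/2$.

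The technical heart, and the step I expect to be the main obstacle, is the effect of the substitution in Step~5, which can move the excess of the rest of the system in either direction, so one cannot simply say ``the marked equations are independent, hence simultaneously satisfiable.'' First I would dispose of the easy ingredients. Rule~\ref{rule1} (Step~6) leaves the excess unchanged: merging two equations with equal left-hand sides changes $\mathrm{wt}_S(x)$ by the same additive constant for every $x$ and changes $W_S$ by twice that constant, so $\max_x \mathrm{wt}_S$ and $W_S/2$ move together. Moreover, because $S_0$ is irreducible by Rule~\ref{rule1} and Rule~\ref{rule1} is reapplied at the end of each round, every $S_{j-1}$ has pairwise distinct left-hand sides; hence when the marked equation $E$ with left-hand side $L_E$ is added into another equation $E'$ one has $L_{E'}\neq L_E$, so no empty left-hand side is ever created. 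Consequently every equation of every $S_j$ has a nonempty left-hand side, a uniformly random assignment satisfies each with probability $1/2$, and so $\mathrm{exc}(S_j)\ge 0$.

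For the one-round inequality I would isolate the marked variable $z_l$. Fix the values $y$ of all variables other than $z_l$, and for each equation $F$ containing $z_l$ let $v_F(y)\in\{0,1\}$ be the value of $z_l$ needed to satisfy $F$ given $y$. In the original system the best choice of $z_l$ attains $P(y)+\max(Q_0,Q_1)$, where $P(y)$ is the satisfied weight of the $z_l$-free equations and $Q_0,Q_1$ are the weights of the $z_l$-equations wanting $z_l=0$ and $z_l=1$. The substituted equation $E'+E$ (which no longer contains $z_l$) is satisfied by $y$ precisely when $v_{E'}(y)=v_E(y)$; summing over the substituted equations, the system $\tilde S$ obtained after deleting $E$ and applying Step~5 satisfies $\mathrm{wt}_{\tilde S}(y)=\mathrm{wt}_S(y,\,z_l=v_E(y))-w_E$. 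Taking maxima over $y$ gives $\max \mathrm{wt}_{\tilde S}=\big(\max_{x:\,E\text{ satisfied}}\mathrm{wt}_S(x)\big)-w_E\le \max\mathrm{wt}_S-w_E$, while $W_{\tilde S}=W_S-w_E$ exactly. Subtracting half the total weights yields $\mathrm{exc}(S_{j-1})\ge \mathrm{exc}(\tilde S)+w_j/2$, and since Rule~\ref{rule1} preserves the excess, $\mathrm{exc}(\tilde S)=\mathrm{exc}(S_j)$.

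Combining, $\mathrm{exc}(S)=\mathrm{exc}(S_0)\ge \mathrm{exc}(S_k)+\tfrac12\sum_{j=1}^k w_j\ge 0+\tfrac12 k$, so $S$ is a {\sc yes}-instance. The delicate points I would watch are: verifying that the ``baked-in'' choice $z_l=v_E(y)$ in $\tilde S$ coincides, weight for weight, with forcing $E$ to be satisfied in $S$, which is what makes the loss exactly $w_E$ rather than more; and confirming that irreducibility is genuinely needed rather than decorative, since without it a substitution could create an equation of the form $0=1$, which would destroy the bound $\mathrm{exc}\ge 0$ on which the whole chaining rests.
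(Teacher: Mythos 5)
Your proof is correct, but it is organized quite differently from the paper's. The paper fixes the terminal system $T$ and the $k$ marked equations, observes that for any assignment satisfying all marked equations the Step-5 substitutions preserve the weight of satisfied equations (so $S$ is then equivalent to $T$ together with the marked equations), and finishes by \emph{constructing} a witnessing assignment: it satisfies half the weight of $T$ by the greedy sequential algorithm of H{\aa}stad and Venkatesh, and then satisfies all marked equations by backward substitution, using the fact that the variable marked at iteration $t$ occurs neither in $T$ nor in any equation marked after iteration $t$; the count $W'+(W-W')/2\ge (W+k)/2$ concludes. You instead run a round-by-round induction on the excess $\mathrm{exc}(S)=\max_x \mathrm{wt}_S(x)-W_S/2$: your one-round inequality $\mathrm{exc}(S_{j-1})\ge \mathrm{exc}(S_j)+w_j/2$ is the paper's equivalence observation localized to a single marked equation (your identity $\mathrm{wt}_{\tilde S}(y)=\mathrm{wt}_S(y,z_l=v_E(y))-w_E$ is exactly that observation), your base case $\mathrm{exc}(S_k)\ge 0$ replaces the greedy algorithm by an averaging argument over uniformly random assignments, and the chaining replaces the explicit backward-substitution construction. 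Both arguments hinge on the same two facts, namely that Step 5 preserves satisfaction weights conditional on the marked equation being satisfied, and that irreducibility by Rule~\ref{rule1} (maintained by Step 6) prevents empty left-hand sides so the residual system can have half its weight satisfied; you make this second role of irreducibility more explicit than the paper does. What the paper's route buys is constructiveness: it exhibits the assignment in polynomial time, whereas your averaging step is existential (though it derandomizes by the same greedy method). What your route buys is modularity: the potential-function formulation cleanly separates the three ingredients (Rule~\ref{rule1} preserves excess; each marking contributes $w_j/2\ge 1/2$ by integrality of the weights; the final excess is nonnegative), and each is verified by a short local computation rather than a global construction.
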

\begin{proof}
Assume that $\cal A$ has marked $k$ equations in the {\em input} system $S$ and let $T$ be the system of equations remained in $S$ after $\cal A$ has stopped.
Observe that for every assignment of values to the variables $z_1,\ldots ,z_n$ that satisfies all marked equations, the operation of Step 5 of $\cal A$
replaces $S$ by an {\em equivalent} system (i.e., both systems have the same difference in weight of satisfied and falsified equations).
Thus, for every such assignment, $S$ is equivalent to $T$ together with the marked equations.
We will show that there is an assignment that satisfies all marked equations and half of equations of $T$ (in terms of weight). This will be sufficient due to the following.
Let $W'$ be the total weight of the marked equations. Then the total weight of the satisfied equations is $W'+(W-W')/2= (W+W')/2\ge (W+k)/2$ since $W'\ge k$ by integrality of the weights.

We can find a required assignment as follows.
We start by finding an assignment of values to the variables in $T$ that satisfies half of equations of $T$ (in terms of weight), using the following algorithm from \cite{HastadVenkatesh02}:  Assign values to the variables sequentially, and after each assignment, perform the obvious algebraic simplifications. When about to assign a value to $x_j$, consider all equations of the form $x_j = b$, for constant $b$. Assign $x_j$ a value satisfying at least half of these equations (in terms of weight).

It remains to assign any values to the variables not in $T$ of the marked equations such that they are all satisfied. This is possible if we
find an assignment that satisfies the last marked equation, then find an assignment satisfying the equation marked before the last, etc. Indeed,
the equation marked before the last contains
a (marked) variable $z_l$ not appearing in the last equation, etc.
\end{proof}

\begin{lemma}\label{lemma:yesinst}
If an instance of {\MrLT} is irreducible by Rule \ref{rule1} and its number of variables $n \ge 2^{k}r(n)$, then it is a {\sc yes}-instance.
\end{lemma}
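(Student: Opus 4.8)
The plan is to run Algorithm $\cal A$ on the given instance and appeal to Lemma~\ref{lem:marked}. Since the instance is irreducible by Rule~\ref{rule1} by hypothesis, that lemma says it suffices to prove that $\cal A$ succeeds in marking $k$ equations. Now $\cal A$ marks exactly one equation in each pass of its while-loop (Steps~3--4), and the loop can terminate prematurely only when $S$ becomes empty before $k$ equations have been marked. Because every variable occurs in at least one equation, a surviving variable always witnesses a surviving equation, so as long as at least one variable remains, $S\neq\emptyset$ and the loop continues. Hence the whole statement reduces to showing that, starting from $n\ge 2^{k}r(n)$ variables, at least one variable still survives at the beginning of each of the first $k$ iterations; then $\cal A$ completes $k$ iterations and marks $k$ equations.

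First I would track the number $n_i$ of variables present at the start of iteration $i$, with $n_1=n$, and bound how many variables one iteration can destroy. The device that controls this is the minimum-degree choice in Step~2. Writing $R$ for the set of $\rho_l$ equations that contain the selected variable $z_l$, only equations in $R$ have their supports altered by the elimination in Step~5, while equations avoiding $z_l$ are untouched; consequently a variable can be eliminated only if every equation containing it lies in $R$ (or is deleted by Rule~\ref{rule1} in Step~6). Since $z_l$ has minimum degree $\rho_l=|R|$, any such variable must occur in all $\rho_l$ equations of $R$, in particular in the chosen equation $\sum_{i\in\al}z_i=b$ of Step~3, so the variables lost to cancellation form a subset of $\al$. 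This is the key point: the collateral loss per iteration is governed by the size of a single equation rather than by the whole system, and the hypothesis supplies exactly the slack needed, since initially $|\al|\le r(n)\le n/2^{k}$. I would then propagate this bound across the $k$ iterations, aiming to show that the variable count cannot fall by more than a factor of two per step (the source of the $2^{k}$), so that $n_{k}\ge n/2^{k-1}\ge 2r(n)\ge 1$ and the system is still nonempty when iteration~$k$ begins.

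The hard part will be keeping the per-iteration loss genuinely bounded across all $k$ iterations, because the two later steps of $\cal A$ fight against the clean picture above. The replacement in Step~5 sends $\al'$ to $\al\Delta\al'$, which can be strictly larger than either $\al$ or $\al'$, so the bound $|\al|\le r(n)$ that holds at the outset need not survive once supports start to grow; and the application of Rule~\ref{rule1} in Step~6 can delete whole equations, an additional channel through which a variable may lose its last remaining occurrence. I expect the crux to be controlling these two effects simultaneously, again through the minimum-degree selection, so as to show that the enlargement of supports and the deletions triggered by Rule~\ref{rule1} cannot conspire to erase more than roughly half of the surviving variables in any one iteration; the slack furnished by $n\ge 2^{k}r(n)$ would then never be exhausted before $k$ equations have been marked.
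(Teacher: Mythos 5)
Your skeleton is the right one --- run $\cal A$, invoke Lemma~\ref{lem:marked}, and show that some variable (hence some equation) survives to the start of iteration $k$ --- and you correctly identify the two destructive effects that must be controlled: growth of supports under the symmetric difference of Step~5, and deletion of whole equations by Rule~\ref{rule1} in Step~6. But the two claims you put forward in place of a proof do not hold up. First, your ``key point'' that the variables eliminated in an iteration form a subset of $\al$ is valid only if Step~6 is ignored: an equation \emph{not} containing $z_l$ can be deleted outright when its left-hand side coincides with some $\al\Delta\al'$ and the weights cancel, and a variable all of whose occurrences lie in such equations is eliminated without belonging to $\al$ at all. Your parenthetical ``(or is deleted by Rule~\ref{rule1} in Step~6)'' concedes exactly this channel, but the sentence that follows silently drops it, so the containment in $\al$ is never actually true in general. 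Second, the invariant you aim for --- that the variable count falls by at most a factor of two per iteration, which you call ``the source of the $2^k$'' --- is neither established nor is it the mechanism behind the bound: late in the process an equation's support can be comparable in size to the whole set of surviving variables, so nothing you have said prevents a single iteration from wiping out far more than half of them. The proposal ends by naming this as the crux and leaving it open, so the core of the lemma remains unproved.

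The paper closes the gap with two quantitative facts that are absent from your write-up. (i) Let $R_t$ be the maximum number of variables in any equation at iteration $t$; since Step~5 replaces $\al'$ by $\al\Delta\al'$, we get $R_{t+1}\le 2R_t$, hence $R_t\le 2^{t-1}r$. This doubling of \emph{support size}, not a halving of the variable count, is the real source of the $2^k$. (ii) A per-iteration counting argument: the equations affected in iteration $t$ are the $\rho_{l_t}$ equations containing $z_{l_t}$ together with at most $\rho_{l_t}-1$ equations cancelled by Rule~\ref{rule1} (each modified equation can collide with at most one other), i.e.\ at most $2\rho_{l_t}-1$ equations, each containing at most $R_t$ variables; an eliminated variable must have all of its at least $\rho_{l_t}$ occurrences (here is where the minimality of $\rho_{l_t}$ enters) among these affected equations, so at most $(2\rho_{l_t}-1)R_t/\rho_{l_t}<2R_t$ variables can die in iteration $t$. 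Summing gives fewer than $\sum_{t=1}^{k-1}2^t r<2^k r\le n$ variables lost over the first $k-1$ iterations, so iteration $k$ takes place and Lemma~\ref{lem:marked} yields a {\sc yes}-instance. Your minimum-degree instinct is exactly the right one, but it must be used as an averaging bound over \emph{all} affected equations (including those killed by Rule~\ref{rule1}), not as a claim that losses are confined to $\al$.
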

\begin{proof}
Let $\rho_{l_t}$ be the $\rho_l$ picked in step 2 of Iteration $t$ of algorithm $\cal A$, and let $R_t$ be the maximum number of variables in any equation in $S$ at Iteration $t$.
Observe that $R_1$ = $r$, and that $R_{t+1}\le 2R_t$. Thus, $R_t \le 2^{t-1}r$.
In  Iteration $t$ of $\cal A$, by minimality of $\rho_{l_t}$, at  most $(2\rho_{l_t}-1)R_t/\rho_{l_t}<2R_t$  variables will be removed from the system.
Thus, the total number of variables completely deleted from the system after $k-1$ iterations is less than $\sum_{t=1}^{k-1} 2R_t \le \sum_{t=1}^{k-1} 2^t r<2^{k}r.$ So, if $2^{k}r\le n$ then Iteration $k$ is possible, and hence, by Lemma \ref{lem:marked}, we have a {\sc yes}-instance.
\end{proof}

\begin{theorem}\label{thm:main1}
If an instance of {\MrLT} is irreducible by Rule \ref{rule1} and  $r(n)=o(n)$, then {\MrLT} is fixed-parameter tractable.
\end{theorem}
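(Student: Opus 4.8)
The plan is to convert Lemma~\ref{lemma:yesinst} into a win/win (kernel-type) argument. Given an irreducible instance with parameter $k$, I would first test whether $n \ge 2^{k}r(n)$. If this inequality holds, Lemma~\ref{lemma:yesinst} immediately certifies that we face a \textsc{yes}-instance, so the algorithm can report \textsc{yes} after merely evaluating $r(n)$ and performing the comparison in polynomial time. The entire content of the theorem is thus concentrated in the complementary case $n < 2^{k}r(n)$, where I must show that the instance is small enough to be solved by exhaustive search within an $f(k)$ factor.

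Hence the key step is the complementary case, and here I would invoke the hypothesis $r(n)=o(n)$. Applied with the constant $c=2^{-k}$, it yields a threshold $N(k)$ such that $r(n) < 2^{-k}n$, i.e.\ $2^{k}r(n) < n$, for every $n > N(k)$. Consequently the inequality $n < 2^{k}r(n)$ can hold only when $n \le N(k)$, so in the complementary case the number of variables is bounded by a function of $k$ alone. Next I would bound the number of equations using irreducibility: since $S$ is irreducible by Rule~\ref{rule1}, the sets $\alpha_j$ are pairwise distinct and nonempty, whence $m \le 2^{n}-1 \le 2^{N(k)}-1$. With both $n$ and $m$ bounded by functions of $k$, I would conclude by trying all $2^{n}\le 2^{N(k)}$ assignments, computing for each the total weight of satisfied equations and checking whether it reaches $(W+k)/2$. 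This runs in time $g(k)\cdot(nm)^{O(1)}$, establishing fixed-parameter tractability.

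The main obstacle is not combinatorial but lies in accounting for the complexity honestly. The threshold $N(k)$ exists by $r(n)=o(n)$, yet the algorithm never computes it: it only evaluates $r(n)$ for the single given value of $n$ and branches, and the bound $n \le N(k)$ enters purely in the running-time analysis, where it guarantees that the brute-force branch is entered only for boundedly many variables and hence that a computable bound $f(k)$ on the running time exists (assuming, as is standard for a fixed function, that $r$ is computable so that the test $n \ge 2^{k}r(n)$ can be performed). I would finally remark that the irreducibility hypothesis costs no generality: by Lemma~\ref{mGEn} one may apply Rule~\ref{rule1} in polynomial time to reach an equivalent irreducible instance before running the procedure above, so the theorem in fact yields fixed-parameter tractability of \MrLT\ whenever $r(n)=o(n)$.
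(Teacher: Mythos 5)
Your proof is correct and follows essentially the same route as the paper's: a win/win dichotomy on the test $n \ge 2^{k}r(n)$ via Lemma~\ref{lemma:yesinst}, with the hypothesis $r(n)=o(n)$ forcing $n$ to be bounded by a function of $k$ alone in the complementary case, which is then handled by brute force over all assignments. Your additional remarks (the bound $m\le 2^{n}-1$ from irreducibility, the computability caveat, and the observation via Lemma~\ref{mGEn} that irreducibility costs no generality) only make explicit details that the paper's proof leaves implicit.
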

\begin{proof}
Let $r=o(n)$. By Lemma \ref{lemma:yesinst}, if $n \ge 2^{k}r$, then we have a {\sc yes}-instance. Otherwise, $n < 2^{k}r$ and so
$n\le g(k)$ for some function $g(k)$ depending on $k$ only. In the last case, in time $O(m^{O(1)}2^{g(k)})$ we can check whether our instance is a {\sc yes}-instance.
\end{proof}

Gutin et al. \cite{GutinKimSzeiderYeo09a} prove that {\MrLT} is fixed-parameter tractable for $r=O(1).$ Using the method of \cite{GutinKimSzeiderYeo09a} one can only extend this result to $r= o(\log m)$.  If $r= o(\log m)$ then $r=o(n)$ (since $m<2^n$  by Rule \ref{rule1}) and, thus, {\MrLT} is fixed-parameter tractable by Theorem \ref{thm:main1}. However, if $r= \Omega(\log m)$ and $r=o(n)$ then
{\MrLT} is fixed-parameter tractable by Theorem \ref{thm:main1}, but this result cannot be obtained using the method of \cite{GutinKimSzeiderYeo09a}.

Let $\rho=\max_{1\le i\le n}\rho_i.$ Let {\MLT} be  {\MrLT} with $r(n)=n$.

\begin{theorem}\label{thm:main2}
Let the input system $S$ be irreducible by Rules \ref{rulerank} and \ref{rule1} and let $S$ have $m$ equations.
If $\rho=o(m)$,  then {\MLT} is fixed-parameter tractable.
\end{theorem}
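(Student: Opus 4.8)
The plan is to prove the exact analogue of Lemma~\ref{lemma:yesinst} for {\MLT} --- namely that an instance irreducible by Rules~\ref{rulerank} and \ref{rule1} with $m\ge 2^{k}\rho$ is automatically a {\sc yes}-instance --- and then to finish exactly as in the proof of Theorem~\ref{thm:main1}. The point is that for $r(n)=n$ one has only $R_t\le n$, so the variable-counting argument of Lemma~\ref{lemma:yesinst} is vacuous; instead I would run $\cal A$ and bound the number of \emph{equations} that leave the system in each iteration, with $\rho$ playing the role dual to that of $r$. As before it suffices to guarantee that $\cal A$ can complete $k$ iterations, since then $k$ equations are marked and Lemma~\ref{lem:marked} applies: its hypothesis holds because $S$ is irreducible by Rule~\ref{rule1}, and Step~6 restores this irreducibility at the end of every iteration, so at the start of each iteration all left-hand sides are pairwise distinct. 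Because no equation ever acquires an empty left-hand side, the system is nonempty precisely when equations remain, so I only need to keep the number of equations positive through the first $k-1$ iterations.

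For this I would establish two estimates dual to those in Lemma~\ref{lemma:yesinst}. Write $\rho^{(t)}$ for the maximum of the $\rho_i$ at the start of Iteration~$t$, so $\rho^{(1)}=\rho$. First, eliminating $z_{l_t}$ in Step~5 can increase the number of equations containing a fixed $z_i$ only when $z_i$ lies in the marked equation, and then $z_i$ is inserted into at most the $\rho_{l_t}-1$ other equations that contained $z_{l_t}$; since $\rho_{l_t}\le\rho^{(t)}$ and Rule~\ref{rule1} only decreases degrees, this yields $\rho^{(t+1)}\le 2\rho^{(t)}$ and hence $\rho^{(t)}\le 2^{t-1}\rho$, the analogue of $R_t\le 2^{t-1}r$. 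Second, I claim Iteration~$t$ deletes at most $2\rho_{l_t}-1<2\rho^{(t)}$ equations: Steps~3--4 delete one equation; Step~5 changes only the $\rho_{l_t}-1$ remaining equations containing $z_{l_t}$, and these may all be merged away or cancelled in Step~6; every other equation is untouched and these retain pairwise-distinct left-hand sides, so such an equation is lost only when its group is cancelled to weight~$0$, which consumes a modified equation and hence occurs at most $\rho_{l_t}-1$ times. Summing gives at most $1+(\rho_{l_t}-1)+(\rho_{l_t}-1)=2\rho_{l_t}-1$ equations removed.

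Combining the two estimates, the number of equations removed during Iterations $1,\dots,k-1$ is less than $\sum_{t=1}^{k-1}2\rho^{(t)}\le\sum_{t=1}^{k-1}2^{t}\rho<2^{k}\rho$. Hence if $m\ge 2^{k}\rho$ the system is still nonempty at the start of Iteration~$k$, so $\cal A$ marks a $k$-th equation and the instance is a {\sc yes}-instance. To obtain fixed-parameter tractability I would then argue exactly as in Theorem~\ref{thm:main1}: if $m\ge 2^{k}\rho$ we answer {\sc yes}; otherwise $m<2^{k}\rho$, and since $\rho=o(m)$ this inequality forces $m\le h(k)$ for some function $h$ of $k$ alone, whereupon irreducibility by Rule~\ref{rulerank} gives $n\le m\le h(k)$ and we decide the instance by trying all at most $2^{h(k)}$ assignments, in time $2^{h(k)}(nm)^{O(1)}$ including the polynomial-time reductions.

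I expect the only delicate point to be the second estimate, the per-iteration equation bound, because Rule~\ref{rule1} can merge or annihilate equations beyond the ones touched directly in Step~5; the argument above handles this by counting how many of the untouched, pairwise-distinct left-hand sides can be destroyed, each destruction consuming one of the few modified equations. The degree-doubling bound and the final win/brute-force dichotomy are then routine transcriptions of the arguments already given for Lemma~\ref{lemma:yesinst} and Theorem~\ref{thm:main1}.
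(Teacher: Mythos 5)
Your proof is correct, but your key counting argument is genuinely different from the paper's. Both proofs share the same skeleton: run algorithm $\cal A$, show that if $m$ is large compared with $\rho$ and $k$ then fewer than $m$ equations can be deleted during the first $k-1$ iterations, so Iteration $k$ marks a $k$-th equation and Lemma \ref{lem:marked} yields a {\sc yes}-instance; otherwise $\rho=o(m)$ forces $m$ (and hence $n\le m$, by irreducibility under Rule \ref{rulerank}) to be bounded by a function of $k$ alone, and brute force finishes. The difference lies in how the deleted equations are counted. You transpose the doubling argument of Lemma \ref{lemma:yesinst}: the maximum occurrence can at most double per iteration, giving $\rho^{(t)}\le 2^{t-1}\rho$, and each iteration removes at most $2\rho_{l_t}-1$ equations (one marked, at most $\rho_{l_t}-1$ modified ones, and at most $\rho_{l_t}-1$ untouched ones, each of which can vanish only when its whole group cancels to weight $0$, consuming a modified equation); this yields the threshold $m\ge 2^{k}\rho$. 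The paper instead uses a static, global argument: let $Q$ be the set of equations of the \emph{input} system containing at least one of the $k-1$ marked variables, so $|Q|\le\rho(k-1)$; an equation outside $Q$ is never rewritten by Step 5 (its left-hand side stays fixed, and these left-hand sides are pairwise distinct by irreducibility), so it can disappear only by colliding with an equation of $Q$, and each equation of $Q$ can ever share a left-hand side with at most one equation outside $Q$; hence at most $2|Q|\le 2\rho(k-1)$ deletions in total. Both bounds suffice for fixed-parameter tractability, since any computable threshold function of $k$ works in the final dichotomy. What the paper's version buys is a bound linear rather than exponential in $k$: a {\sc yes}-instance is guaranteed already when $m>2\rho(k-1)$, not just when $m\ge 2^{k}\rho$, which would matter if one wished to relax the hypothesis $\rho=o(m)$ toward something like $\rho\le m/\omega(k)$. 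What your version buys is uniformity: it is a direct transcription of the variable-counting proof of Lemma \ref{lemma:yesinst} with $\rho$ in the role of $r$, and the one delicate step you flagged --- the fate of the untouched equations under Step 6 --- is handled correctly by your accounting.
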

\begin{proof}
Let $\rho=o(m)$. Apply algorithm $\cal A$ (for this theorem, there is no need to do Step 1 or select the $z_l$ with minimum $\rho_l$ on Step 2; we can arbitrarily choose any $z_l$ still in $S$).
We will show that after $k-1$ iterations at most $2\rho (k-1)$ equations
have been deleted. Let $Q$ be the set of equations that, at the beginning, contain at least
one
of $z_{l_1}, \ldots, z_{l_{k-1}}$, where $z_{l_1}, \ldots,
z_{l_{k-1}}$ are the variables marked in the first $k-1$
iterations. Note that $|Q| \le \rho (k-1)$. An equation not in $Q$ is
only deleted if there exists an equation in $Q$ such that, after some
applications of the symmetric difference operation of Step 5, the two
equations have the same left-hand side. Furthermore, observe that each
equation in $Q$ can only ever have the same left-hand side as at most one
equation not in $Q$. So the number of equations removed is at most
$2|Q|\le 2\rho (k-1)$.
Observe that either $2\rho (k-1)<m$ in which case Iteration $k$ is possible and we can apply Lemma \ref{lem:marked}, or
$m\le 2\rho (k-1)$ and $m\le f(k)$ for some
function $f(k)$ depending on $k$ only. If $m\le f(k)$, $n\le m\le
f(k)$
and in time $O(m^{O(1)}2^{f(k)})$ we can check whether our instance is
a {\sc yes}-instance.
\end{proof}

Using the approach of \cite{GutinKimSzeiderYeo09a} it is easy to show that if $\rho=o(\sqrt{m})$  then {\MLT} is fixed-parameter tractable and this cannot be extended even to the case $\rho=\Theta(\sqrt{m}).$ Thus, Theorem \ref{thm:main2} provides a much stronger result.

\section{Boolean Constraint Satisfaction Problems above Average}\label{sec3}

The aim of this section is to show that in the proofs of the main results of \cite{AlonEtAl2009a} for a wide family of Boolean Constraint Satisfaction Problems above the Average, Lemma \ref{lemma:yesinst} can replace probabilistic and Fourier analysis inequalities. As a result, the proofs become purely combinatorial and slightly simpler. Alon et al. \cite{AlonEtAl2009a} provide all details for {\MST} (defined below) only and comment that basically the same arguments can be used for a wide class of Boolean Constraint Satisfaction Problems above the Average. Thus, we restrict ourselves  to {\MST} only as basically the same arguments can be used for the wide class of Boolean Constraint Satisfaction Problems above the Average.

Let $r(\ge 2)$ be a constant.

\begin{center}
\fbox{~\begin{minipage}{11cm}
  \textsc{Max $r$-Sat above the Average} (or {\MST} for short)

  \smallskip

  \emph{Instance:} A pair $(F,k)$ where $F$ is a multiset of $m$
  clauses, each of size $r$;
  $F$ contains only variables $x_1, x_2, \ldots ,x_n$, and $k$ is a nonnegative integer.

  \smallskip

  \emph{Parameter:} The integer $k$.

\smallskip

  \emph{Question:} Is there a truth assignment to the $n$ variables such that the total number of satisfied clauses is at least $E+k2^{-r}$, where $E=m(1-2^{-r})$, the average number of satisfied clauses?

\smallskip
\end{minipage}~}
\end{center}

Let $F$ contain clauses $C_1,\ldots ,C_m$
in the variables $x_1, x_2, \ldots ,x_n$. We may assume that $x_i \in \{-1,1\}$, where $-1$ corresponds to {\sc true}.
For $F$, consider a polynomial
\[
X=\sum_{j=1}^m (1-\prod_{x_i\in C_j}(1+\epsilon_i x_i)),
\]
where $\epsilon_i\in \{-1,1\}$ and $\epsilon_i=1$ if and only if $x_i$
is in~$C$.

\begin{lemma}\label{lem:MrSTlem}\cite{AlonEtAl2009a}
The answer to {\MST} is {\sc yes} if and only if there exists an
assignment for $x_1, x_2, \ldots ,x_n$ for which $X \ge k$.
\end{lemma}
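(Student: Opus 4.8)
The plan is to show that $X$ is, up to an affine transformation, just the number of satisfied clauses, so that the inequality $X \ge k$ translates directly into the threshold condition defining {\MST}. The whole argument is a direct computation, and the only real content is understanding the single factor $1 + \epsilon_i x_i$.

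First I would analyse one factor. Fix a clause $C_j$ and a variable $x_i$ appearing in it. Recalling that $x_i = -1$ encodes \textsc{true} and that $\epsilon_i = 1$ exactly when $x_i$ occurs positively, I would check the four cases and observe that the literal on $x_i$ in $C_j$ is satisfied precisely when $\epsilon_i x_i = -1$, i.e. when $1 + \epsilon_i x_i = 0$; in every other case $1 + \epsilon_i x_i = 2$. Thus each factor is $0$ on a satisfied literal and $2$ on a falsified one.

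Next I would multiply over the clause. Since $C_j$ has $r$ (distinct) variables, the product $\prod_{x_i \in C_j}(1 + \epsilon_i x_i)$ vanishes as soon as one literal is satisfied, and equals $2^r$ only when every literal of $C_j$ is falsified. Hence the summand $1 - \prod_{x_i \in C_j}(1 + \epsilon_i x_i)$ equals $1$ if $C_j$ is satisfied and $1 - 2^r$ if $C_j$ is falsified. Writing $s$ for the number of clauses satisfied by a given assignment and summing over $j$, I get $X = s + (m - s)(1 - 2^r) = m - (m - s)2^r$.

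Finally I would rearrange. The above identity gives $X \ge k \Leftrightarrow (m - s)2^r \le m - k \Leftrightarrow s \ge m - (m-k)2^{-r}$, and since $m - (m-k)2^{-r} = m(1 - 2^{-r}) + k2^{-r} = E + k2^{-r}$, the inequality $X \ge k$ holds for some assignment if and only if some assignment satisfies at least $E + k2^{-r}$ clauses --- which is exactly the definition of a {\sc yes}-instance. I do not expect a genuine obstacle here; the only place to be careful is the sign bookkeeping in the first step, where the conventions ``$-1$ is \textsc{true}'' and ``$\epsilon_i = 1$ for positive occurrences'' must be combined correctly so that it is the \emph{satisfied} literal that kills its factor.
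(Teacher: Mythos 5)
Your proof is correct. Note that the paper itself does not prove this lemma at all---it is imported from \cite{AlonEtAl2009a} as a black box---so there is no internal proof to compare against; your computation essentially reconstructs the standard argument from that reference. The sign bookkeeping checks out: with $-1$ encoding \textsc{true} and $\epsilon_i=1$ for positive occurrences, the factor $1+\epsilon_i x_i$ is $0$ exactly on a satisfied literal and $2$ on a falsified one, so each summand is $1$ for a satisfied clause and $1-2^r$ for a falsified clause, giving $X=m-(m-s)2^r$ when $s$ clauses are satisfied. One small streamlining: rewrite this as $X=2^r s-m(2^r-1)=2^r(s-E)$, which makes $X\ge k \Leftrightarrow s\ge E+k2^{-r}$ immediate with no inequality manipulation at all, and also makes transparent the interpretation of $X$ as $2^r$ times the excess over the average.
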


\begin{theorem}\label{thm:MrST}
The problem {\MST} is fixed-parameter tractable for each constant $r\ge 2$.
\end{theorem}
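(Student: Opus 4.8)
The plan is to reduce {\MST} to {\MrcLT} with the constant function $r(n)=r$ and then invoke Lemma~\ref{lemma:yesinst}, so that its purely combinatorial bound plays the role that the Fourier and probabilistic inequalities play in \cite{AlonEtAl2009a}. By Lemma~\ref{lem:MrSTlem} it suffices to decide whether some assignment gives $X\ge k$, so I would first expand $X$ into monomials. Since $\prod_{x_i\in C_j}(1+\epsilon_i x_i)=\sum_{\al\subseteq C_j}\prod_{i\in\al}\epsilon_i x_i$, the $j$th term of $X$ equals $-\sum_{\emptyset\ne\al\subseteq C_j}\prod_{i\in\al}\epsilon_i x_i$, so collecting like terms over all clauses yields
\[
X=\sum_{\emptyset\ne\al,\ |\al|\le r} c_\al\prod_{i\in\al}x_i ,
\]
where each $c_\al$ is an integer because every monomial enters with coefficient $\pm 1$, and $|\al|\le r$ because every clause has exactly $r$ variables.

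Next I would read off a Max Lin-2 system $S'$ from this polynomial. Writing $x_i=(-1)^{z_i}$ with $z_i\in\mathbb{F}_2$, so that $\prod_{i\in\al}x_i=(-1)^{\sum_{i\in\al}z_i}$, for each $\al$ with $c_\al\ne 0$ I introduce the equation $\sum_{i\in\al}z_i=b_\al$ of weight $w_\al=|c_\al|$, where $b_\al=0$ when $c_\al>0$ and $b_\al=1$ when $c_\al<0$. This choice makes $c_\al\prod_{i\in\al}x_i$ equal to $+w_\al$ on assignments satisfying the equation and $-w_\al$ on those falsifying it. The system $S'$ is a genuine instance of {\MrcLT}: its weights are positive integers, every equation has at most $r$ variables, and, since like terms have been collected, distinct equations have distinct left-hand sides, so $S'$ is already irreducible by Rule~\ref{rule1}. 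Moreover, with $W=\sum_\al w_\al$ and $W_{\mathrm{sat}}$ the weight satisfied by a given assignment, the preceding sign bookkeeping gives $X=W_{\mathrm{sat}}-(W-W_{\mathrm{sat}})=2W_{\mathrm{sat}}-W$, so $X\ge k$ if and only if $W_{\mathrm{sat}}\ge (W+k)/2$. Combined with Lemma~\ref{lem:MrSTlem}, this shows that $(F,k)$ is a {\sc yes}-instance of {\MST} precisely when $S'$ is a {\sc yes}-instance of {\MrcLT}.

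Finally I would apply Lemma~\ref{lemma:yesinst} to the irreducible instance $S'$ with the constant bound $r$. Let $n'\le n$ be the number of variables actually occurring in $S'$; note $X$ depends only on these variables. If $n'\ge 2^{k}r$, then Lemma~\ref{lemma:yesinst} declares $S'$ a {\sc yes}-instance and we answer {\sc yes}. Otherwise $n'<2^{k}r$ is bounded by a function of $k$ alone, and we decide the instance by evaluating $X$ on all $2^{n'}<2^{2^{k}r}$ assignments of the relevant variables, giving an $O(m^{O(1)}2^{2^{k}r})$ algorithm; the construction of $S'$ is polynomial since for constant $r$ each clause spawns at most $2^{r}-1$ monomials. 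I expect the only delicate point to be the passage from $X$ to $S'$: one must verify that collecting monomials produces positive integer weights and, for free, a system irreducible by Rule~\ref{rule1}, as this is exactly the hypothesis that lets the combinatorial Lemma~\ref{lemma:yesinst} replace the analytic inequalities of \cite{AlonEtAl2009a}.
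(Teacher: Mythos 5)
Your proposal is correct and follows essentially the same route as the paper: the same expansion of $X$ into monomials, the same translation of terms into weighted equations over $\mathbb{F}_2$ via $x_i=(-1)^{z_i}$, the same application of Lemma~\ref{lemma:yesinst} when $n'\ge 2^k r$, and the same brute-force search when $n'$ is bounded by a function of $k$. In fact you spell out two details the paper leaves implicit --- that collecting like terms makes the system irreducible by Rule~\ref{rule1} (the hypothesis Lemma~\ref{lemma:yesinst} actually needs) and the identity $X=2W_{\mathrm{sat}}-W$ behind the paper's ``it is easy to check'' --- so nothing further is required.
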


\begin{proof}
Given a {\MST} instance, define the polynomial $X$ of degree at most $r$
as above. After algebraic simplification $X=X(x_1, x_2, \ldots ,x_n)$
can be written as
$\label{Xeq}X=\sum_{I\in {\cal S}}X_I,$ where $X_I=c_I\prod_{i\in
I}x_i$, each $c_I$ is a nonzero integer and $\cal S$ is a family of
nonempty subsets of $\{1,\ldots ,n\}$ each with at most $r$
elements. Thus, $X$ is a polynomial of degree at most $r$.

Now define an instance {\MrcLT} with the variables
$z_1, z_2, \ldots ,z_n$ as follows. For each
nonzero term $c_I \prod_{i \in I} x_i$  consider the linear equation
$\sum_{i \in I} z_i =b$, where $b=0$ if $c_I$ is positive, and $b=1$
if $c_I$ is negative, and assign this equation the
weight $w_I=|c_I|$.  It is easy to check
that this system of equations has an assignment $z_i$ satisfying equations of total weight at least
$[\sum_{I \in {\cal S}} w_I+k]/2$  if and only if
there are $x_i \in \{-1,1\}$ so that $X(x_1,x_2, \ldots ,x_n)
\geq k$. This is shown by the transformation $x_i=(-1)^{z_i}$. Let $n'$ be the number of variables in the instance of {\MrcLT}; clearly $n'\le n.$ Observe that $|S|\le n^r$.

By Lemma \ref{lemma:yesinst}, if $n'> 2^{k}r$, then we have a {\sc yes}-instance of {\MrcLT} and, thus, by Lemma \ref{lem:MrSTlem}, the answer to {\MST} is {\sc yes}.
If $n' \le 2^{k}r$ then we can find the maximum of $X$ by using all assignments in time  $|S|^{O(1)}2^{n'}=n^{O(r)}2^{r2^k}$ and apply Lemma \ref{lem:MrSTlem} to check whether the answer to {\MST} is {\sc yes}.

It remains to observe that the instance of {\MrcLT} can be constructed in time $(m2^r)^{O(1)}.$
\end{proof}

\medskip

\paragraph{Acknowledgments}
We are thankful to Noga Alon for helpful discussions. Research of Gutin and Jones was supported in part by an EPSRC
grant. Research of Gutin was also supported
in part by the IST Programme of the European Community, under the
PASCAL 2 Network of Excellence.

\urlstyle{rm}

\end{document}